\newcolumntype{C}[1]{>{\centering\arraybackslash$}p{#1}<{$}}
\newcommand{\bs}{\boldsymbol{\mathrm{s}}}
\newcommand{\br}{\boldsymbol{\mathrm{r}}}
\newcommand{\be}{\boldsymbol{\mathrm{e}}}
\newcommand{\bu}{\boldsymbol{\mathrm{u}}}
\newcommand{\bt}{\boldsymbol{\mathrm{t}}}
\newcommand{\bA}{\boldsymbol{\mathrm{A}}}
\newcommand{\buone}{\bA^{T}\br+\be_{1}}
\newcommand{\coloneqq}{:=}
\DeclareMathOperator{\var}{Var}
 \DeclareMathOperator{\E}{E}
\begin{document}
\emergencystretch 3em
\title{Semi-Compressed CRYSTALS-Kyber}
%
%

\author{{Shuiyin} {Liu}\inst{1}\orcidID{0000-0002-3762-8550} \and
{Amin} {Sakzad}\inst{2}\orcidID{0000-0003-4569-3384} }

\authorrunning{S. Liu and A. Sakzad}
%
\institute{ {Holmes
Institute}, {Melbourne, VIC 3000},  {Australia}
\\ \email{SLiu@holmes.edu.au}\\
\and
{Monash University}, {Melbourne, VIC 3800}, {Australia}\\
\email{Amin.Sakzad@monash.edu}}

\maketitle              
\begin{abstract}

In this paper, we investigate the communication overhead of the Kyber, which has recently been standardized by the National Institute of Standards and Technology (NIST). Given the same decryption failure rate (DFR) and security argument, we show it is feasible to reduce the communication overhead of the Kyber by $54\%$. The improvement is based on two technologies: ciphertext quantization and plaintext encoding.  First, we prove that the Lloyd-Max quantization is optimal to minimize the decryption decoding noise. The original Kyber compression function is not optimal. Second, we propose an encoding scheme, which combines Pulse-Amplitude Modulation (PAM), Gray mapping, and a binary error correcting code. An explicit expression for the DFR is derived. The minimum possible communication overhead is also derived. Finally, we demonstrate that with the Lloyd-Max quantization, $8$-PAM, Gray mapping, and a shortened binary BCH$(768,638,13)$ code, the proposed scheme encapsulates $638$ bits (e.g., $2.5$ AES keys) in a single ciphertext.


\keywords{Module leaning with errors \and Lattice-based cryptography \and Quantization \and Encoding \and Ciphertext expansion rate}
\end{abstract}
\section{Introduction}
\vspace{-1mm} 

CRYSTALS-Kyber, the first post-quantum encryption algorithm selected by the National Institute of Standards and Technology (NIST), has attracted great attention from researchers and engineers \cite{NISTpqcdraft2023}. The security of Kyber derives from cautious parameterizations of the Module Learning with Errors (M-LWE) problem, which is widely believed to be post-quantum secure.  M-LWE based encryption approaches have a small failure probability during decryption, which is referred to as decryption failure rate (DFR). Observed decryption errors might leak the secret information to an adversary \cite{DFRattack2022}. Since the DFR is proportional to the value of the M-LWE module $q$, M-LWE based encryption approaches commonly choose a large $q$ to reduce the DFR. However, this choice leads to a large ciphertext size: for example, KYBER1024 converts a $32$-byte plaintext into a $1568$-byte ciphertext \cite{NISTpqcdraft2023}. 
Large ciphertexts are hard to communicate over the network and this is an obstacle to the adoption of Kyber in applications like Internet-of-Things (IoT).

The communication overhead is measured by the ciphertext expansion rate (CER). To reduce the CER, Kyber uses an almost uniform (lossy) compression function, denoted as, $\mathsf{Compress}_{q}(x,d)$,  to compress its ciphertexts \cite{Kyber2018}, where $x \in \mathbb{Z}_q$ and $d< \lceil \log_2(q)\rceil$. Since the function $\mathsf{Compress}_{q}(x,d)$ maps the fraction $x/q$ to the nearest fraction with denominator $2^d$, it can be viewed as an almost uniform quantization \cite{Azimi11}. From the decryption decoding perspective, the function $\mathsf{Compress}_{q}(x,d)$ introduces a $(\lceil \log_2(q)\rceil-d)$-bit quantization noise, resulting in a larger DFR. To reduce the DFR, Kyber uses two values $\{0, \left\lceil {q}/{2}\right\rfloor\}$ to represent one information bit, which is equivalent to an \emph{uncoded} binary Pulse Amplitude Modulation ($2$-PAM) \cite{Proakis}. The large Euclidean distance between $\{0, \left\lceil {q}/{2}\right\rfloor\}$ ensures that the DFR is sufficiently small. In summary, to reduce the CER and DFR, Kyber uses an almost uniform quantization and a $2$-PAM encoder.

From the plaintext encoding perspective, given the same DFR, it is possible to use error correcting codes to encode more information bits in a single ciphertext. The CER is thereby reduced. In \cite{liu2023lattice}, by replacing the $2$-PAM by the Leech lattice constellation, a $380$-bit plaintext (about $1.5$ AES keys) can be encrypted into a single ciphertext. The CER of Kyber is reduced by $32.6\%$. In \cite{LWEchannel2022}, a $471$-bit plaintext (about $1.8$ AES keys) is embedded in a single ciphertext, by using $5$-PAM and a Q-ary BCH code. The CER of Kyber is reduced by about $45.6\%$. In the broader literature of LWE, a variety of coding approaches have been proposed for different encryption schemes, including a concatenation of BCH and LDPC codes for NewHope-Simple \cite{NewhopeECC2018} and lattice codes for FrodoKEM \cite{FrodoCong2022}. Both approaches offer a CER reduction less than $15\%$. In practice, however, exchanging a slightly larger plaintext may have limited application, as only one AES key can be extracted from one ciphertext. An interesting question is if it is possible to encapsulate multiple 
$256$-bit secretes (e.g., AES keys, random seeds, initialization vectors) in a single ciphertext.

From the ciphertext compression perspective, Kyber uses an almost uniform quantization to compress the ciphertext \cite{Kyber2018}. It is known that for a continuous uniform input, the uniform quantization is optimal to minimize the mean squared error (MSE) \cite{Azimi11}. However, the ciphertext of Kyber is discrete and the quantization codebook size does not divide the module $q$ (i.e., a prime number). Kyber compression function is not necessarily optimal to decrease the MSE. Therefore, it is possible to reduce both the DFR and CER of Kyber by simply using a better quantization. In the literature, a few attempts have been made to develop better quantisation, including $\mathsf{D_4}$ lattice quantizer for NewHope \cite{NewHope2016} and $\mathsf{E_8}$ lattice quantizer for M-LWE \cite{MLWEE82021}. However, the CERs of \cite{NewHope2016}\cite{MLWEE82021} are larger than that of Kyber, leaving an important question open: whether an optimal ciphertext quantization can help reducing both the DFR and CER of Kyber.

From the security perspective, compressing the ciphertext is equivalent to adding an extra noise term to the ciphertext. It does not affect the security level of M-LWE based approaches. Although the error correcting code has no influence on the distribution of the ciphertext \cite{LWEchannel2022}, the implementation of error correcting decoding might be vulnerable to side-channel attacks. In \cite{ECCTimingAttack2019}, the authors point out that
error correcting decoding usually recovers valid codewords faster than the codewords that contain errors. This time information can be used to distinguish between valid ciphertexts and failing ciphertexts. However, this attack can be thwarted using a constant-time decoder. For example, a constant-time BCH decoder was proposed in \cite{constantBCH2020}. A constant-time Leech lattice decoder was given in \cite{constanttimeLeech2016}. With an optimal quantization and a constant-time encoding scheme, we expect that the CER can be reduced significantly with no security implications.

In this paper, we aim at reducing the CER of Kyber by more than $50\%$. First, we prove that the optimal quantization for M-LWE based approaches is the Lloyd-Max quantization \cite{Lloy1982}, which minimizes the MSE. Second, we present a variant of Kyber, where only the first part of ciphertext is quantized. This scheme is referred to as Semi-Compressed CRYSTALS-Kyber (SC-Kyber). This design is based on the fact that compressing the second part of ciphertext has little impact on reducing the CER, but at the cost of adding a $7$ or $8$-bit quantization noise to the decoding. Third, we show that the decryption decoding in the SC-Kyber is equivalent to the detection in the Additive white Gaussian noise (AWGN) channel. This result allows us to derive the maximum possible plaintext size, or equivalently, the minimum possible CER. Finally, we propose an encoding scheme for SC-Kyber. An explicit expression for the DFR is derived for such an encoding. We finally demonstrate that with the Lloyd-Max quantization, an $8$-PAM constellation, Gray mapping, and a shortened binary BCH$(768,638,13)$ code, the proposed scheme encrypts $638$ bits (e.g., $2.5$ AES keys) into a single ciphertext, given the same DFR and security level as in KYBER1024.  We summarize our results in Table \ref{sum_contribution}.

\vspace{-5mm}
\begin{table}[th]
\centering
\caption{Comparison of Variants of KYBER1024}
\label{sum_contribution}\centering
\vspace{-3mm} 
\begin{threeparttable}[b]
\begin{tabular}{|c|c|c|c|c|}
\hline
& Plaintext Size & CER  &  DFR & Source \\ \hline
KYBER1024-Uncoded & $256$ ($1$ AES Key) & $49$ & $2^{-174}$ &\cite{NISTpqcdraft2023}\cite{Kyber2021}\\ \hline
KYBER1024-Leech & $380$  ($1.5$ AES Keys) & $33$ & $2^{-226}$ & \cite{liu2023lattice}\\ \hline 
KYBER1024-Q-BCH & $471$  ($1.8$ AES Keys) & $26.6$ & $2^{-174}$ &  \cite{LWEchannel2022}\\ \hline
SC-KYBER1024-B-BCH & $638$  ($2.5$ AES Keys) & $22.5$ & $2^{-174}$ &  This work\\ \hline
\end{tabular}
\end{threeparttable}
\end{table}
\vspace{-8mm}

\section{Preliminaries}
\vspace{-2mm}
\subsection{Notation}

\emph{Rings:} Let $R_{q}$ denote the polynomial ring $\mathbb{Z}_{q}[X]/(X^{n}+1)$, where $n =256$ and $q=3329$ in this work. Elements in $R_{q}$ will be denoted with regular font letters, while a vector of the coefficients in $R_{q}$ is represented by bold lower-case letters. Matrices and vectors will be denoted with bold upper-case and lower-case letters, respectively. The transpose of a vector $\mathbf{v}$ or a matrix $\mathbf{A}$ is denoted by $\mathbf{v}^T$ or $\mathbf{A}^T$, respectively.  The default vectors are column vectors.

\emph{Sampling and Distribution:} For a set $\mathcal{S}$, we write $s \leftarrow \mathcal{S}$ to mean that $s$
is chosen uniformly at random from $\mathcal{S}$. If $\mathcal{S}$ is a probability
distribution, then this denotes that $s$ is chosen according to
the distribution $\mathcal{S}$. For a polynomial $f(x) \in R_q$ or a vector of such polynomials, this notation is extended coefficient-wise. We use $\var(\mathcal{S})$ to represent the variance of the distribution $\mathcal{S}$. Let $x$ be a bit string and $S$ be a distribution taking $x$ as the input, then $y\sim S\coloneqq\mathsf{Sam}\left(x\right)$ represents that the output $y$ generated by distribution $S$ and input $x$ can be extended to any desired length. We denote $\beta_{\eta}=B(2\eta
,0.5)-\eta $ as the central binomial distribution over $\mathbb{Z}$. The uniform distribution is represented as $\mathcal{U}$. The Cartesian product of two sets $A$ and $B$ is denoted by $A \times B$. We write  $A \times A = A^2$.

\emph{Kyber Compress and Decompress functions:} Let $x\in\mathbb{R}$ be a real number, then $\left\lceil x\right\rfloor $ means rounding to the closet integer with ties rounded up. The operations $\left\lfloor x\right\rfloor $ and $\left\lceil x \right\rceil$ mean rounding $x$ down and up, respectively. Let $x \in \mathbb{Z}_{q}$ and $d \in \mathbb{Z}$ be such that $2^d<q$. Kyber compression and decompression functions are \cite{NISTpqcdraft2023}:
{\setlength{\abovedisplayskip}{3pt} \setlength{\belowdisplayskip}{3pt}
\begin{align}
\mathsf{Compress}_{q}(x,d)&=\lceil (2^{d}/q)\cdot x\rfloor \mod 2^{d},\nonumber\\
\mathsf{Decompress}_{q}(x,d)&=\lceil (q/2^{d})\cdot x\rfloor.\label{ComDecom}
\end{align}}

\emph{Decryption Failure Rate (DFR) and Ciphertext Expansion Rate (CER):}
We let $\text{DFR} =\delta := \Pr (\hat{m} \neq m)$, where $m$ is a shared secret. It is desirable to have a small $\delta$, in order to be safe against decryption failure attacks \cite{DFRAttack2019}. In this work, the communication cost refers to the ciphertext expansion rate (CER), i.e., the ratio of the ciphertext size to the plaintext size.

\vspace{-3mm}
\subsection{Kyber Key Encapsulation Mechanism (KEM)}
Each message $m \in \{0,1\}^{n}$ can be
viewed as a polynomial in $R$ with coefficients in $\{0,1\}$. We recall Kyber.CPA = (KeyGen; Enc; Dec) \cite{Kyber2021} as described in Algorithms \ref{alg:kyber_keygen} to \ref{alg:kyber_dec}. The values of $\delta$, CER, and $(q, k, \eta_1,\eta_2, d_u, d_v)$ are given in Table \ref{Kyber_Par}. Note that the parameters $(q, k, \eta_1,\eta_2)$ determine the security level of Kyber, while the parameters $(d_u, d_v)$ describe the ciphertext compression rate.

\vspace{-5mm}
\begin{algorithm}[H]
\caption{$\mathsf{Kyber.CPA.KeyGen()}$: key generation}
\label{alg:kyber_keygen}
\begin{algorithmic}[1]

    \State
    $\rho,\sigma\leftarrow\left\{ 0,1\right\} ^{256}$

    \State
    $\bA\sim R_{q}^{k\times k}\coloneqq\mathsf{Sam}(\rho)$

    \State
    $(\bs,\be)\sim\beta_{\eta_1}^{k}\times\beta_{\eta_1}^{k}\coloneqq\mathsf{Sam}(\sigma)$

    \State
    $\bt\coloneqq\boldsymbol{\mathrm{As+e}}$\label{line:t}

    \State \Return $\left(pk\coloneqq(\boldsymbol{\mathrm{t}},\rho),sk\coloneqq\bs\right)$  

\end{algorithmic}
\end{algorithm}

\vspace{-12mm}

\begin{algorithm}[H]
\caption{$\mathsf{Kyber.CPA.Enc}$ $(pk=(\boldsymbol{\mathrm{t}},\rho),m\in\{0,1\}^{n})$}
\label{alg:kyber_enc}
\begin{algorithmic}[1]

	\State
	$r \leftarrow \{0,1\}^{256}$

	\State
	$\boldsymbol{\mathrm{A}}\sim R_{q}^{k\times k}\coloneqq\mathsf{Sam}(\rho)$
	
	\State  $(\boldsymbol{\mathrm{r}},\boldsymbol{\mathrm{e}_{1}},e_{2})\sim\beta_{\eta_1}^{k}\times\beta_{\eta_2}^{k}\times\beta_{\eta_2}\coloneqq\mathsf{Sam}(r)$
	
	\State  $\boldsymbol{\mathrm{u}}\coloneqq\mathsf{Compress}_{q}(\buone,d_{u})$\label{line:u}
	
	\State  $v\coloneqq\mathsf{Compress}_{q}(\boldsymbol{\mathrm{t}}^{T}\boldsymbol{\mathrm{r}}+e_2+\left\lceil {q}/{2}\right\rfloor \cdot m,d_{v})$\label{line:v}
	
	\State \Return $c\coloneqq(\boldsymbol{\mathrm{u}},v)$

\end{algorithmic}
\end{algorithm}

\vspace{-12mm}

\begin{algorithm}[H]
\caption{${\mathsf{Kyber.CPA.Dec}}\ensuremath{(sk=\bs,c=(\bu,v))}$}
\label{alg:kyber_dec}
\begin{algorithmic}[1]

    \State
    $\bu\coloneqq\mathsf{Decompress}_{q}(\bu,d_{u})$

    \State
    $v\coloneqq\mathsf{Decompress}_{q}(v,d_{v})$

    \State \Return $\mathsf{Compress}_{q}(v-\bs^{T}\bu,1)$

\end{algorithmic}
\end{algorithm}

\vspace{-10mm}

\begin{table}[ht]
\caption{Parameters of Kyber \cite{Kyber2021}\cite{NISTpqcdraft2023}}
\label{Kyber_Par}\centering
\vspace{-3mm}
\begin{tabular}{|c|c|c|c|c|c|c|c|c|c|}
\hline
& $k$ & $q$ & $\eta_{1}$ & $\eta_{2}$ & $d_{u}$ & $d_{v}$ & $\delta$ & CER & Information Bits \\ \hline
KYBER512 &  $2$ & $3329$ & $3$ & $2$ & $10$ & $4$ & $2^{-139}$ & $24$ & $256$ ($1$ AES key)\\ \hline
KYBER768 & $3$ & $3329$ & $2$ & $2$ & $10$ & $4$ & $2^{-164}$ &$34$ & $256$ ($1$ AES key)\\ \hline
KYBER1024 & $4$ & $3329$ & $2$ & $2$ & $11$ & $5$ & $2^{-174}$ &$49$ & $256$ ($1$ AES key) \\ \hline
\end{tabular}
\vspace{-2mm}
\end{table}

\subsection{Kyber Decryption Decoding}

Let $n_{e}$ be the decryption decoding noise in
Kyber. According to \cite{Kyber2021}, we have%
\begin{align}
n_{e}&=v-\mathbf{s}^{T}\mathbf{u} -\left\lceil q/2\right\rfloor
\cdot m  \notag \\ 
&= \mathbf{e}^{T}\mathbf{r}+e_{2}+c_{v}-\mathbf{s}^{T}\left( \mathbf{e}%
_{1}+\mathbf{c}_{u}\right),  \label{Ne}
\end{align}
where $c_{v}$ and $\mathbf{c}_{u}$ are rounding noises generated due to the compression operation. We let $\psi_{d_v}$ and $\psi_{d_u}^{k}$ be their respective distribution. The elements in $c_v$ or $\mathbf{c}_u$ are assumed to be i.i.d. and independent of other terms in (\ref{Ne}). The distribution $\psi _{d}$ is almost uniform over the integers in $[-\lceil q/2^{d+1}\rfloor ,\lceil q/2^{d+1}\rfloor]$ \cite{Kyber2018}. 

Kyber decoding problem can thus be formulated as%
\begin{equation}
y=v-\mathbf{s}^{T}\mathbf{u}=\left\lceil q/2\right\rfloor \cdot m+n_{e} \text{,}  \label{decoding_mode}
\end{equation}
i.e., given the observation $y\in R_{q}$, recover the value of $m$.

The encoding scheme in (\ref{decoding_mode}) can be viewed as a \emph{uncoded} $2$-PAM \cite{Proakis}, which can be easily generalized to the coded case  \cite{liu2023lattice}\cite{LWEchannel2022}:
\begin{equation}
y=\left\lceil q/p\right\rfloor \cdot \mathsf{ENC}(m)+n_{e} \text{,}  \label{decoding_mode_coded}
\end{equation}
where $p \in [2, \sqrt{2q})$ is an integer and the function $\mathsf{ENC}(\cdot)$ represents an encoder. For example, \cite{liu2023lattice} uses a lattice encoder, while \cite{LWEchannel2022} uses a Q-ary BCH encoder.

The distribution of $n_e$ can be evaluated numerically \cite{Kyber2021}. To gain more insight, \cite{liu2023lattice} shows that $n_e$ is well-approximated by the sum of multivariate normal and a discrete uniform random vector, i.e.,
\begin{equation}
n_{e}\leftarrow \mathcal{N}(0,\sigma _{G}^{2}I_{n}) +\mathcal{U}%
(-\lceil q/2^{d_{v}+1}\rfloor ,\lceil
q/2^{d_{v}+1}\rfloor ),  \label{D_ne}
\end{equation}%
where $\sigma _{G}^{2}=kn\eta _{{1}}^2/4+ kn\eta_{_1}/2 \cdot (\eta_2/2+\var(\psi_{d_u}))+\eta_2/2$.

Equation (\ref{D_ne}) explains how the compression parameters $(d_u,d_v)$ affect the decoding noise. An open question is if the Kyber compression function in (\ref{ComDecom}) is optimal to minimize the DFR. We will answer this question in the next section.

\vspace{-3mm}

\section{Optimal Quantization for Kyber}
\vspace{-1mm}
We consider the compression, or more generally, the quantization in Kyber. Without loss of generality, we define a general quantization function 
\begin{equation}
\hat{\mathbf{x}}=Q_{L}(\mathbf{x}, \mathcal{C}_L, T_L), \label{Quantizer}
\end{equation}
where a random vector $\mathbf{x} \in \mathbb{Z}_{q}^{n}$ is quantized to a quantizer $\hat{\mathbf{x}} \in \mathcal{C}_L$, according to the quantization codebook $\mathcal{C}_L \in \mathbb{R}^n$ and the corresponding decision regions $T_L \subset \mathbb{R}^n$. The number $L$ represents the number of quantizers in $\mathcal{C}_L$. 

We define the unique index of $\hat{\mathbf{x}}$ in $\mathcal{C}_L$ as
$\mathsf{Index}_L(\hat{\mathbf{x}})$, i.e.,
\begin{equation}
\mathcal{C}_L(\mathsf{Index}_L(\hat{\mathbf{x}})) = \hat{\mathbf{x}}. \label{ind_c}
\end{equation}
When storing/transmitting $\hat{\mathbf{x}}$, we only need to save/send $\mathsf{Index}_L(\hat{\mathbf{x}})$. Let $\mathbf{e}_{L} = \mathbf{x}-\hat{\mathbf{x}}$ be the quantization error vector. The mean squared error (MSE), denoted as
\begin{equation}
\mathsf{MSE}(\mathbf{e}_{L}) = \mathsf{E}(\|\mathbf{x}-\hat{\mathbf{x}}\|^2),
\end{equation}
is almost invariably used in this text to measure distortion. The optimal quantization should achieve minimum MSE (MMSE):
\begin{equation}
(\mathcal{C}_L, T_L) = \arg \min_{\mathcal{C}'_L \in \mathbb{R}^n, T_L'\subset \mathbb{R}^n }\mathsf{E}(\|\mathbf{x}-Q_{L}(\mathbf{x}, \mathcal{C}'_L, T_L')\|^2). \label{MMSE}
\end{equation}
For simplicity of notation, we define the MMSE quantization as
\begin{equation}
\hat{\mathbf{x}}=Q_{\mathsf{MMSE},L}(\mathbf{x}).
\end{equation}

In this section, we will show Kyber compression function in (\ref{ComDecom}) is not optimal to decrease the MSE. We will prove that the Lloyd-Max quantization \cite{Lloy1982} produces the optimal quantization codebook and decision regions for the Kyber.

\vspace{-2mm}
\subsection{The Lloyd-Max Quantization (LMQ)}

The Lloyd-Max quantization \cite{Lloy1982} is a scalar quantization, which minimizes MSE distortion with a fixed number of regions. We consider the situation with $L$ quantizers $\mathcal{C}_L=\{\hat{x}_1,\ldots, \hat{x}_L\}$. Let the corresponding quantization intervals be
\begin{equation}
T_L=\{ (\beta_{i-1},\beta_{i}), i=1,\ldots, L \},
\end{equation}%
where $\beta_{0}=-\infty$ and $\beta_{L}=\infty$. Given a bounded random variable $x$, let $x_{\min}= \min(x)$, $x_{\max}= \max(x)$, and $\Pr(x)$ be the probability mass function (PMF) of $x$. The Lloyd-Max algorithm takes inputs $(x, \Pr(x))$ and produces $(\mathcal{C}_L, T_L)$.

\begin{enumerate}
    \item Choose an arbitrary initial set of $L$ representation points $\{\hat{x}_i\}_{i=1}^{L}$ in ascending order. A common choice is $\hat{x}_i = x_{\min}+ i \cdot (x_{\max}-x_{\min})/L, i=1,\ldots, L$.
    \item For each $i$; $1 \leq i \leq L-1$, set ${\beta}_i =  (\hat{x}_i + \hat{x}_{i+1})/2$.
    \item For each $i$; $1 \leq i \leq L$, set $\hat{x}_i$ equal to the conditional mean of $x$ given $x \in (\beta_{i-1},\beta_{i}]$. See line 8 of Algorithm~\ref{alg:LloydMAX_Pseudocode}.
    \item Repeat Steps 2 and 3 until further improvement in MSE is negligible.  
\end{enumerate}
The MSE decreases (or remains the same) for each iteration. As shown in Step 2, each decision threshold ${\beta}_i$ is in the middle of two consecutive quantizers. Therefore, the input $x$ will be quantized to the nearest quantizer, i.e.,
\begin{equation}
\hat{x}= Q_{\mathsf{MMSE},L}({x}) = \arg \min_{\hat{x}' \in \mathcal{C}_L}\|{x}-\hat{x}'\|. \label{MMSE_MAP}
\end{equation}
The pseudocode is given in Algorithm \ref{alg:LloydMAX_Pseudocode} (the discrete version of LMQ in \cite{Scheunders1996}). 

\vspace{-7mm}

\begin{algorithm}[H]
\caption{$[\mathcal{C}_L, T_L] = \mathsf{LloydMax}(x,\Pr(x))$}
\label{alg:LloydMAX_Pseudocode}
\begin{algorithmic}[1]

	\State
	$\hat{x}_i \coloneqq x_{\min} + i \cdot (x_{\max}-x_{\min})/L, i=1,\ldots, L$  \Comment{initial set of $\mathcal{C}_L$ }

 \State {$\hat{\beta}_{0}\coloneqq -\infty$ and $\hat{\beta}_{L}\coloneqq \infty$}

  \Repeat

   \For{$i \gets 1$ to $L-1$}                    
        \State {${\beta}_i \coloneqq (\hat{x}_i + \hat{x}_{i+1})/2$}
  \EndFor   \Comment{update $T_L$}

\For{$i \gets 1$ to $L$}                    
        \State {$\hat{x}_i \coloneqq \dfrac{\sum_{x \in (\beta_{i-1},\beta_{i}]}\Pr(x)x}{{\sum_{x \in (\beta_{i-1},\beta_{i}] }\Pr(x)}}$}    \Comment{the conditional mean}
  \EndFor    
  
  \Until{$T$ does not change}
	
\State \Return $\mathcal{C}_L, T_L$

\end{algorithmic}
\end{algorithm}

\vspace{-5mm}

Since the M-LWE samples are assumed to be uniformly distributed, the following lemma gives the optimal quantization for the Kyber.

\begin{lemma}[Global Minimum]\label{lem:gm} Let $\mathbf{x}  = [x_1,\ldots, x_n]^T  \leftarrow \mathbb{Z}_{q}^{n}$ and 
$ L_i \in \mathbb{Z}_{q}  \backslash \{0\}$ for $i=1, \ldots, n$. If
$[\mathcal{C}_{L_i}, T_{L_i}] = \mathsf{LloydMax}(x_i,\Pr(x_i))$
for $i=1, \ldots, n$, then 
$C_L = C_{L_1} \times \cdots \times C_{L_n}$ and $T_L=T_{L_1} \times \cdots \times T_{L_n}$ is the global solution to the following MMSE quantization problem
\begin{equation}
(\mathcal{C}_L, T_L) = \arg \min_{\mathcal{C}'_L \in \mathbb{R}^n, T_L'\subset \mathbb{R}^n }\mathsf{E}(\|\mathbf{x}-Q_{L}(\mathbf{x}, \mathcal{C}'_L, T_L')\|^2). \label{MMSE_t2}
\end{equation}
\end{lemma}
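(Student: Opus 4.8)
The plan is to reduce the $n$-dimensional vector quantization problem to $n$ independent scalar problems, and then invoke the known optimality of the Lloyd--Max algorithm for each scalar problem. First I would observe that since $\mathbf{x} \leftarrow \mathbb{Z}_q^n$ is uniform, its coordinates $x_1,\ldots,x_n$ are independent and each $x_i$ is uniform on $\mathbb{Z}_q$. The MSE objective decomposes additively: for any codebook/region pair $(\mathcal{C}'_L,T'_L)$, writing $Q_L(\mathbf{x},\mathcal{C}'_L,T'_L) = \hat{\mathbf{x}} = [\hat x_1,\ldots,\hat x_n]^T$, we have $\mathsf{E}(\|\mathbf{x}-\hat{\mathbf{x}}\|^2) = \sum_{i=1}^n \mathsf{E}((x_i-\hat x_i)^2)$. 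The key point is that a \emph{product} codebook $\mathcal{C}'_L = \mathcal{C}'_{L_1}\times\cdots\times\mathcal{C}'_{L_n}$ with product decision regions $T'_L = T'_{L_1}\times\cdots\times T'_{L_n}$ yields a quantizer that acts coordinate-wise, so the $i$-th term depends only on $(\mathcal{C}'_{L_i},T'_{L_i})$; hence the minimum over product quantizers equals $\sum_{i=1}^n \min_{(\mathcal{C}'_{L_i},T'_{L_i})} \mathsf{E}((x_i - \hat x_i)^2)$, and by the stated optimality of $\mathsf{LloydMax}$ (which converges to the MMSE scalar quantizer, cf.\ \eqref{MMSE_MAP}), each scalar minimum is attained by $[\mathcal{C}_{L_i},T_{L_i}] = \mathsf{LloydMax}(x_i,\Pr(x_i))$.

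Second, and this is the main obstacle, I must rule out that a \emph{non-product} quantizer (with arbitrary $\mathbb{R}^n$ codebook and arbitrary decision regions) does strictly better. The standard argument: fix any optimal quantizer $Q_L$ with regions $T'_L$ and codebook $\mathcal{C}'_L$. Two successive refinements each weakly decrease MSE. (i) \emph{Centroid condition}: replacing each codepoint by the conditional mean $\mathsf{E}(\mathbf{x}\mid \mathbf{x}\in \text{region})$ cannot increase MSE. (ii) \emph{Nearest-neighbor condition}: given fixed codepoints, reassigning each $\mathbf{x}$ to its nearest codepoint cannot increase MSE, so WLOG the regions are Voronoi cells. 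Then I would argue that at an optimal configuration the total MSE is bounded below by what independent per-coordinate optimization achieves — concretely, $\mathsf{E}(\|\mathbf{x}-Q_L(\mathbf{x})\|^2) = \sum_i \mathsf{E}((x_i - (\hat{\mathbf{x}})_i)^2) \geq \sum_i \mathsf{E}((x_i - \tilde x_i)^2)$ where $\tilde x_i$ is the output of the optimal $L_i$-level scalar quantizer applied to $x_i$, because for each fixed $i$ the map $\mathbf{x}\mapsto (Q_L(\mathbf{x}))_i$ is \emph{some} (measurable) function of $\mathbf{x}$ taking at most $L_i$ distinct values along the $i$-th axis only if $L = \prod_j L_j$ — here one must be careful: a general codebook of size $L=\prod_j L_j$ need not have only $L_i$ distinct $i$-th coordinates. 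The clean way around this is the classical bound that any $L$-point quantizer of a random vector has MSE at least $\sum_i (\text{MSE of the optimal } L_i\text{-point scalar quantizer of } x_i)$ whenever $L \le \prod_i L_i$, which follows by conditioning and Jensen/convexity of the per-coordinate distortion in the number of levels; since the product Lloyd--Max construction attains this bound with $L=\prod_i L_i$, it is globally optimal.

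Alternatively, and perhaps more simply, I would phrase the lower bound directly: for any quantizer with $L=\prod_i L_i$ points, project the codebook onto coordinate $i$ to get a (multiset) scalar codebook $\mathcal{C}_L^{(i)}$ of at most $L$ points, but then sharpen using the fact that the best achievable coordinate-$i$ distortion under the \emph{joint} constraint is no smaller than under an $L_i$-level constraint — the crucial inequality being that if $L = \prod L_j$ then you cannot simultaneously afford more than $L_i$ effective levels on every axis; a counting/pigeonhole argument on Voronoi cell projections makes this rigorous. Since the paper cites \cite{Lloy1982} and \cite{Scheunders1996} for scalar optimality, I expect the intended proof simply combines: (a) MSE additivity over coordinates, (b) independence and identical uniform marginals of the $x_i$, (c) scalar Lloyd--Max optimality per coordinate, and (d) the observation that the product quantizer realizes the coordinate-wise minima simultaneously and hence meets the lower bound $\sum_i \mathsf{MSE}_i^\star$ valid for \emph{all} $L$-point quantizers with $L=\prod_i L_i$. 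I would therefore write the proof in that order, flagging step (d) — the matching lower bound for arbitrary non-product quantizers — as the delicate part requiring either the convexity-of-distortion-in-levels argument or a Voronoi-projection counting argument.
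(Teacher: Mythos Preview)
Your core reduction---MSE additivity over coordinates, independence of the uniform marginals, per-coordinate Lloyd--Max optimality---is exactly what the paper does, but the paper's proof is considerably less ambitious than yours. The paper parametrizes the quantizer by per-coordinate scalar codebooks and thresholds \emph{from the outset}, writes the MSE as a separable sum $\Pr(\mathbf{x})\sum_{i} D_i \prod_{\ell\neq i} L_\ell$ with each $D_i$ depending only on the $i$-th coordinate's parameters, takes partial derivatives to recover the Lloyd--Max stationarity conditions (your centroid and nearest-neighbour conditions specialized to one dimension), and then observes that separability makes every local minimum a global minimum. In other words, the paper establishes optimality of coordinate-wise Lloyd--Max \emph{among product quantizers} and never engages with the non-product case that you correctly flag as the main obstacle; your steps (a)--(c) already reproduce its entire argument.

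Your attempts to close that remaining gap, however, do not work as stated. The ``classical bound'' you invoke---that any $L$-point vector quantizer of an independent source has MSE at least $\sum_i \mathsf{MSE}^\star_i$ when $L=\prod_i L_i$---is false in general: vector quantization can strictly beat the product of optimal scalar quantizers even for i.i.d.\ sources, via the space-filling (shape) advantage, so no such universal lower bound is available. Your fallback pigeonhole/projection argument fails for the reason you yourself half-notice: a codebook of $L=\prod_i L_i$ points in $\mathbb{R}^n$ can have all $L$ of its $i$-th coordinates distinct, so projecting onto an axis does not force an $L_i$-level constraint anywhere. If you want a proof that genuinely covers arbitrary $L$-point quantizers you would need a source-specific argument exploiting the discrete structure of $\mathbb{Z}_q$; the paper simply sidesteps this, so writing up (a)--(c) with the derivative/separability phrasing is what is expected here.
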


\begin{proof}
For $x_i$, $i=1, \ldots, n$, we define $L_i$ quantizers $C_{L_i} = \{\hat{x}_{i,1},\ldots, \hat{x}_{i,L_i}\}$ and the corresponding quantization intervals
\begin{equation}
T_{L_i}=\{(\beta_{i,j-1},\beta_{i,j}), j=1,\ldots, L_i\},
\end{equation}%
where $\beta_{i,0}=-\infty$ and $\beta_{i,L_i}=\infty$.

Since $\mathbf{x} \leftarrow \mathbb{Z}_{q}^{n}$, the MSE can be written as follows:
\begin{equation}
\sum_{\mathbf{x} \in \mathbb{Z}_{q}^{n}}\Pr(\mathbf{x})\|\mathbf{x}-Q_{L}(\mathbf{x}, \mathcal{C}_L, T_L)\|^2 =\Pr(\mathbf{x})\sum_{i=1}^{n} \left( D_i \prod_{\ell \neq i}L_{\ell} \right), \label{D_MSE_G}
\end{equation}
where
\vspace{-5mm}
\begin{equation}
D_i= \sum_{j=1}^{L_i}\sum_{x_i \in (\beta_{i,j-1},\beta_{i,j}] } \|{x_i}-\hat{x}_{i,j}\|^2. 
\end{equation}

To minimize the MSE, we take derivatives of Equation (\ref{D_MSE_G}) with respect to $(\hat{x}_{i,j}, \beta_{i,j})$ and set them equal to zero, leading to the following conditions  for the optimum quantizers $\hat{x}_{i,j}$ and quantization interval boundaries ${\beta}_{i,j}$ \cite{Scheunders1996}\cite{Clapp2006}:
\begin{align}
{\beta}_{i,j} & =  \dfrac{\hat{x}_{i,j}+\hat{x}_{i,j+1}}{2} \notag \\ 
\hat{x}_{i,j} &= \dfrac{\sum_{x_i \in (\beta_{i,j-1},\beta_{i,j}] }x_i}{|(\beta_{i,j-1},\beta_{i,j}] \cap \mathbb{Z}|}. \label{g_cond}
\end{align}
We observe that the optimal $(\hat{x}_{i,j},{\beta}_{i,j})$ only depends on $x_i$. Therefore, any local minimum is also a global minimum. 

A way to solve Equation (\ref{g_cond}) is to first generate an initial set  $\{\hat{x}_{i,1},\ldots, \hat{x}_{i,L_i}\}$, then apply Equation (\ref{g_cond}) alternately until convergence is obtained. This iteration is well known as the Lloyd-Max quantization \cite{Lloy1982}. 
\end{proof}

In Table \ref{Lloyd_MSE}, we compare the values of MSE for Lloyd-Max quantization and Kyber compression. We quantize $x \leftarrow \mathbb{Z}_{q}$ with $L=2^{d}$ quantization levels. In the Kyber, $d=10$ or $11$ is used  to compress the first part of ciphertext, i.e., $\mathbf{u}$.  For $d=11$, we observe that Lloyd-Max quantization has a much smaller MSE. 
\begin{table}[ht]
\caption{$\mathsf{MSE}({e}_{L})$: Kyber Compression vs. Lloyd-Max Quantization}
\label{Lloyd_MSE}\centering
\vspace{-3mm} 
\begin{tabular}{|c|c|c|c|}
\hline
$d$ & $11$ & $10$  \\ \hline
$\text{Kyber Compression}$  &$0.38$ \cite{liu2023lattice} & $0.92$ \cite{Kyber2021} \\ \hline
$\text{Lloyd-Max}$  &$0.1924$ & $0.8468$ \\ \hline
\end{tabular}
\vspace{-5mm}
\end{table}

\begin{remark}
With a continuous uniform input, the Lloyd-Max algorithm returns a set of evenly-spaced intervals that span the range of the input \cite{Azimi11}. With a discrete uniform input $x \leftarrow \mathbb{Z}_{q}$, however, the optimal intervals are not necessarily evenly-spaced. Considering the fact that $q$ is prime, the intervals returned by Lloyd-Max algorithm will have different sizes. The quantization codebook is not obvious.
\end{remark}

\begin{example}
With $x \leftarrow \mathbb{Z}_q$ and $L=2^{d}$, we present the distribution of $e_L = {x}-\hat{{x}}$ in Tables \ref{PMF_com_d11}  and \ref{PMF_com_d10}. We observe that for the Lloyd-Max quantization, the $e_L$ is not uniformly distributed. It confirms that with the discrete uniform input, a uniform quantization is not necessarily optimal to reduce the MSE.
\end{example}

\vspace{-8mm}

\begin{table}[ht]
\caption{PMF of $e_L$ with $L= 2^{11}$}
\label{PMF_com_d11}\centering
\vspace{-6mm} 
\begin{center}
\begin{tabular}{|c|C{1.4cm}|C{1.3cm}|C{1.3cm}|}
\hline
\multicolumn{4}{|c|}{Kyber Compression} \\ \hline
$e_L$ & $-1$ & $0$  & $1$   \\ \hline
$\Pr(e_L)$  &  $0.1916$ & $0.6138$  & $0.1946$ \\ \hline
\end{tabular}

\begin{tabular}{|c|C{1.4cm}|C{1.3cm}|C{1.3cm}|}
\hline
\multicolumn{4}{|c|}{Lloyd-Max} \\
 \hline
$e_L$ & $-0.5$ & $0$  & $0.5$ \\ \hline
$\Pr(e_L)$  &  $0.3848$ & $0.2304$  & $0.3848$ \\ \hline
\end{tabular}
\end{center}
\vspace{-15mm}
\end{table}


\begin{table}[ht]
\caption{PMF of $e_L$ with $L= 2^{10}$}
\label{PMF_com_d10}\centering
\vspace{-6mm} 
\begin{center}
\begin{tabular}{|c|C{1.4cm}|C{1.3cm}|C{1.3cm}|C{1.3cm}|C{1.2cm}|}
\hline
\multicolumn{6}{|c|}{Kyber Compression} \\ \hline
$e_L$ & $-2$ & $-1$  & $0$ & $1$ & $2$  \\ \hline
$\Pr(e_L)$  &  $0.0390$ & $0.3061$  & $0.3068$ & $0.3088$ & $0.0393$  \\ \hline
\end{tabular}

\begin{tabular}{|c|c|c|c|c|c|c|c|}
\hline
\multicolumn{8}{|c|}{Lloyd-Max} \\
 \hline
$e_L$ & $-1.5$ & $-1$  & $-0.5$ & $0$  & $0.5$ & $1$ & $1.5$  \\ \hline
$\Pr(e_L)$  &  $0.0772$ & $0.2304$  & $0.0772$ & $0.2304$ & $0.0772$ & $0.2304$ & $0.0772$  \\ \hline
\end{tabular}
\end{center}
\vspace{-5mm}
\end{table}

\subsection{Kyber with the Lloyd-Max Quantization}

We replace the Kyber compression function in Algorithm \ref{alg:kyber_enc} by the Lloyd-Max quantization in Algorithm \ref{alg:LloydMAX_Pseudocode}. The quantized ciphertext $(\mathbf{u},v)$ is given by
\begin{equation}
 (\mathbf{u} = Q_{\mathsf{MMSE},L_u}(\buone), v=Q_{\mathsf{MMSE},L_v}(\boldsymbol{\mathrm{t}}^{T}\boldsymbol{\mathrm{r}}+e_2+\left\lceil {q}/{2}\right\rfloor \cdot m)).
\end{equation}
where $L_u=2^{knd_u}$ and $L_v=2^{nd_v}$. The corresponding quantization codebooks and decision regions are given by $(\mathcal{C}_{L_u} = \mathcal{C}_{2^{d_u}}^{kn} , T_{L_u}=T_{2^{d_u}}^{kn})$ and $(\mathcal{C}_{L_v}=\mathcal{C}_{2^{d_v}}^{n}, T_{L_v}=T_{2^{d_v}}^{n})$, respectively. They are obtained from Lemma \ref{lem:gm} and Algorithm \ref{alg:LloydMAX_Pseudocode}. 

When storing/transmitting a ciphertext, we only need to save/send the indices of the quantized ciphertext coefficients:
\begin{equation}
 (\mathsf{Index}_{L_u}(\mathbf{u}), \mathsf{Index}_{L_v}(v)).
\end{equation}
The size of the ciphertext remains the same as the original Kyber.  With the Lloyd-Max quantization, the updated Kyber encryption and decryption algorithms are described below.  The key generation function is the same as Algorithm \ref{alg:kyber_keygen}. Thus we omit details here.

\vspace{-5mm}

\begin{algorithm}[H]
\caption{$\mathsf{Kyber.LloydMax.CPA.Enc}$ $(pk=(\boldsymbol{\mathrm{t}},\rho),m\in\{0,1\}^{n})$}
\label{alg:kyber_enc_LM}
\begin{algorithmic}[1]

	\State
	$r \leftarrow \{0,1\}^{256}$

	\State
	$\boldsymbol{\mathrm{A}}\sim R_{q}^{k\times k}\coloneqq\mathsf{Sam}(\rho)$
	
	\State  $(\boldsymbol{\mathrm{r}},\boldsymbol{\mathrm{e}_{1}},e_{2})\sim\beta_{\eta_1}^{k}\times\beta_{\eta_2}^{k}\times\beta_{\eta_2}\coloneqq\mathsf{Sam}(r)$
	
	\State  $\boldsymbol{\mathrm{u}}\coloneqq Q_{\mathsf{MMSE},L_u}(\buone)$\label{line:u_LM}
	
	\State  $v\coloneqq Q_{\mathsf{MMSE},L_v}(\boldsymbol{\mathrm{t}}^{T}\boldsymbol{\mathrm{r}}+e_2+\left\lceil {q}/{2}\right\rfloor \cdot m)$\label{line:v_LM}
	
	\State \Return $c\coloneqq(\mathsf{Index}_{L_u}(\mathbf{u}),\mathsf{Index}_{L_v}(v))$

\end{algorithmic}
\end{algorithm}

\vspace{-12mm}

\begin{algorithm}[H]
\caption{${\mathsf{Kyber.LloydMax.CPA.Dec}}\ensuremath{(sk=\bs,c=(\mathsf{Index}_{L_u}(\mathbf{u}),\mathsf{Index}_{L_v}(v)))}$}
\label{alg:kyber_dec_LM}
\begin{algorithmic}[1]

    \State
    $\bu\coloneqq \mathcal{C}_{L_u}(\mathsf{Index}_{L_u}(\boldsymbol{\mathrm{u}}))$

    \State
    $v\coloneqq \mathcal{C}_{L_v}(\mathsf{Index}_{L_v}(v))$

    \State \Return $\mathsf{Compress}_{q}(v-\bs^{T}\bu,1)$

\end{algorithmic}
\end{algorithm}

\vspace{-5mm}

We then study the DFR of the optimized Kyber in Algorithms \ref{alg:kyber_enc_LM} and \ref{alg:kyber_dec_LM}. Let  $(\mathbf{e}_{L_u}$, ${e}_{L_v})$ be the quantization noise for $(\mathbf{u},v)$, respectively. The Kyber decoding noise $n_{e}$ in (\ref{Ne}) can be rewritten as
\begin{align}
n_{e} = \mathbf{e}^{T}\mathbf{r}+e_{2}+{e}_{L_v}-\mathbf{s}^{T}\left( \mathbf{e}%
_{1}+\mathbf{e}_{L_u}\right).  \label{Ne_q}
\end{align}
Following the same line of Kyber \cite{Kyber2021}, the elements in $\mathbf{e}_{L_u}$ or ${e}_{L_v}$ are assumed to be i.i.d. and independent of other terms in (\ref{Ne_q}). 

Without loss of generality, we use $e_{L}^{(1)} \in \mathcal{S}$ to present an element in $\mathbf{e}_{L}$ or ${e}_{L}$. Since  $\E\left({e}_{L}^{(1)}\right)=0$, we use a variant of \cite[Theorem 1]{liu2023lattice}:
\begin{theorem}\label{the:CLT_noise}
According to the Central Limit Theorem (CLT), the distribution of $n_e$ asymptotically approaches the sum of multivariate normal and quantization error ${e}_{L_v}$:
{\setlength{\abovedisplayskip}{3pt} \setlength{\belowdisplayskip}{3pt}
\begin{equation}
n_{e}\leftarrow \mathcal{N}(0,\sigma _{G}^{2}I_{n}) +{e}_{L_v}, 
\end{equation}%
}where $L_u=2^{knd_u}$, $L_v=2^{nd_v}$,and $\sigma _{G}^{2}=kn\eta _{{1}}^2/4+ kn\eta_{_1}/2 \cdot \left(\eta_{_2}/2 + \mathsf{MSE}\left({e}_{L_u}^{(1)}\right)\right)+\eta_2/2$. The values of $\mathsf{MSE}({e}_{L_u}^{(1)})$ are given in Table \ref{Lloyd_MSE}.
\end{theorem}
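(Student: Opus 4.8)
The plan is to split the decoding noise in (\ref{Ne_q}) as $n_e = g + e_{L_v}$, where $g := \mathbf{e}^{T}\mathbf{r} + e_2 - \mathbf{s}^{T}(\mathbf{e}_1 + \mathbf{e}_{L_u})$, and to prove that the polynomial $g \in R_q$, regarded as the random vector of its $n$ coefficients, satisfies $g/\sigma_G \Rightarrow \mathcal{N}(0, I_n)$ as $kn \to \infty$. Since the entries of $e_{L_v}$ are assumed i.i.d.\ and independent of every term in $g$, conditioning on $e_{L_v}$ then gives $n_e \leftarrow \mathcal{N}(0,\sigma_G^2 I_n) + e_{L_v}$, which is the assertion. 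This follows the template of \cite[Theorem 1]{liu2023lattice}, the only substitutions being $\var(\psi_{d_u}) \mapsto \mathsf{MSE}(e_{L_u}^{(1)})$ and the replacement of the uniform term $\mathcal{U}(-\lceil q/2^{d_v+1}\rfloor,\lceil q/2^{d_v+1}\rfloor)$ by the genuine Lloyd--Max error $e_{L_v}$, whose distribution comes from Algorithm~\ref{alg:LloydMAX_Pseudocode}.

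First I would expand each coordinate of $g$ using the negacyclic product in $R_q = \mathbb{Z}_q[X]/(X^n+1)$: the $j$-th coefficient of $\mathbf{e}^{T}\mathbf{r} = \sum_{i=1}^{k} e_i r_i$ equals $\sum_{i=1}^{k}\sum_{\ell=0}^{n-1} \pm\, e_{i,\ell}\, r_{i,\,j-\ell \bmod n}$, and similarly for $\mathbf{s}^{T}(\mathbf{e}_1 + \mathbf{e}_{L_u})$, so $[g]_j$ is a sum of $N = 2kn+1$ mutually independent, zero-mean, uniformly bounded random variables (samples of $\beta_\eta$ lie in $[-\eta,\eta]$, and $e_{L_u}^{(1)}$ takes finitely many values). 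Using $\var(\beta_{\eta_1}) = \eta_1/2$, $\var(\beta_{\eta_2}) = \eta_2/2$, the mutual independence of $\mathbf{s}$, $\mathbf{e}_1$, $\mathbf{e}_{L_u}$, and $\mathsf{E}(e_{L_u}^{(1)}) = 0$, a direct variance count over the three contributions $\mathbf{e}^{T}\mathbf{r}$, $e_2$, and $\mathbf{s}^{T}(\mathbf{e}_1+\mathbf{e}_{L_u})$ gives $\var([g]_j) = kn\,\eta_1^2/4 + kn\,(\eta_1/2)\bigl(\eta_2/2 + \mathsf{MSE}(e_{L_u}^{(1)})\bigr) + \eta_2/2 = \sigma_G^2$, exactly the stated value.

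Next I would check that distinct coordinates are uncorrelated: for $j \ne j'$, $\mathrm{Cov}([g]_j,[g]_{j'})$ is a finite sum of expectations of products of at most four of the underlying symmetric variables, and by symmetry each such expectation vanishes unless every variable appears an even number of times; since an $e$-type factor can never pair with an $r$-type factor (nor an $\mathbf{s}$-type with an $\mathbf{e}_1$- or $\mathbf{e}_{L_u}$-type), the pairing constraints force $j = j'$. Hence the covariance of $g$ is exactly $\sigma_G^2 I_n$, with no approximation needed for the diagonal structure. As the $N$ summands of each coordinate are uniformly bounded while $\sigma_G \to \infty$, the Lindeberg condition is met, so the multivariate Lindeberg--Feller CLT yields $g/\sigma_G \Rightarrow \mathcal{N}(0,I_n)$; a Berry--Esseen bound of order $O(1/\sqrt{kn})$ quantifies the accuracy at the finite Kyber parameters, and $\sigma_G \ll q$ ensures the reduction mod $q$ is irrelevant. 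Combining with the independence of $e_{L_v}$ finishes the proof.

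The work is bookkeeping rather than conceptual. One must carry the negacyclic sign pattern through the variance and covariance counts, and account for the fact that a single coefficient of $\mathbf{e}$, $\mathbf{s}$, $\mathbf{r}$, $\mathbf{e}_1$, or $\mathbf{e}_{L_u}$ reappears in several coefficients of the corresponding polynomial product; the points to verify carefully are that these reuse-induced correlations contribute nothing to the off-diagonal covariances and that the $\pm$ signs leave each per-coordinate variance unchanged (they do, since $\var(\pm X) = \var(X)$). A second point worth stating explicitly is that, exactly as in \cite{Kyber2021,liu2023lattice}, the statement rests on the modeling assumption that $\mathbf{e}_{L_u}$ and $e_{L_v}$ behave as i.i.d.\ noise independent of the remaining terms; dropping this assumption would require a joint analysis of the Lloyd--Max decision cells and the M-LWE samples that produce them.
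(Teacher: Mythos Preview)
Your proposal is correct and matches the paper's treatment: the paper does not give a stand-alone proof but simply states the theorem as a variant of \cite[Theorem~1]{liu2023lattice} (after noting $\E(e_{L}^{(1)})=0$) and then validates the Gaussian approximation empirically via Fig.~\ref{K_CDF_Plot}. You identify the same reduction and the same two substitutions ($\var(\psi_{d_u})\mapsto\mathsf{MSE}(e_{L_u}^{(1)})$ and $\psi_{d_v}\mapsto e_{L_v}$), and your explicit variance count and Lindeberg sketch already go beyond what the paper itself supplies.
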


\begin{remark}
Theorem \ref{the:CLT_noise} shows how the choice of quantization affects the Kyber decoding noise. For a fixed $(q, k, \eta_1, \eta_2, d_u, d_v)$ in Table \ref{Kyber_Par}, a MMSE quantization can minimize the decoding noise variance $\var(n_e)$. The CLT assumption has been made in the Ring/Module-LWE literature. In \cite{TFHE2018},  the noise coefficients of TFHE scheme after bootstrapping are assumed to be independent Gaussians. The assumption is experimentally verified in \cite[Figure 10]{TFHE2018}. In \cite{CLTRLWE2022}, the noise coefficients of CKKS scheme \cite{cheon2017homomorphic} are assumed to be independent Gaussians. In \cite{LWEchannel2022}, the noise coefficients of Kyber are assumed to be independent. Based on the CLT assumption, an  upper bound on DFR is derived in \cite{liu2023lattice},
which is very close to the numerical bound in \cite{Kyber2021}. We will verify the CLT assumption below.
\end{remark}

Consider the noise term $Y= \mathbf{e}^{T}\mathbf{r}+e_{2}-\mathbf{s}^{T}\left( \mathbf{e}_{1}+\mathbf{e}_{L_u}\right)$. Let $Y=[Y_{1}, \ldots,Y_{n}]^T$ be the coefficients in $Y$. In Fig. \ref{K_CDF_Plot}, we compare the cumulative distribution functions (CDF) of $Y_{1}/\sigma _{G}$ (in blue) and the standard normal distribution (in red). We observe that the curves are almost indistinguishable. Experimental result confirms the distribution of $Y_{1}$ is Gaussian. This experimentally validates our independent assumption in Theorem \ref{the:CLT_noise}, since $Y_{i}$, for $i=1, \ldots, n$,  are uncorrelated and identically distributed random variables.

\begin{figure}[tbp]
\centering
\includegraphics[width=0.8\textwidth]{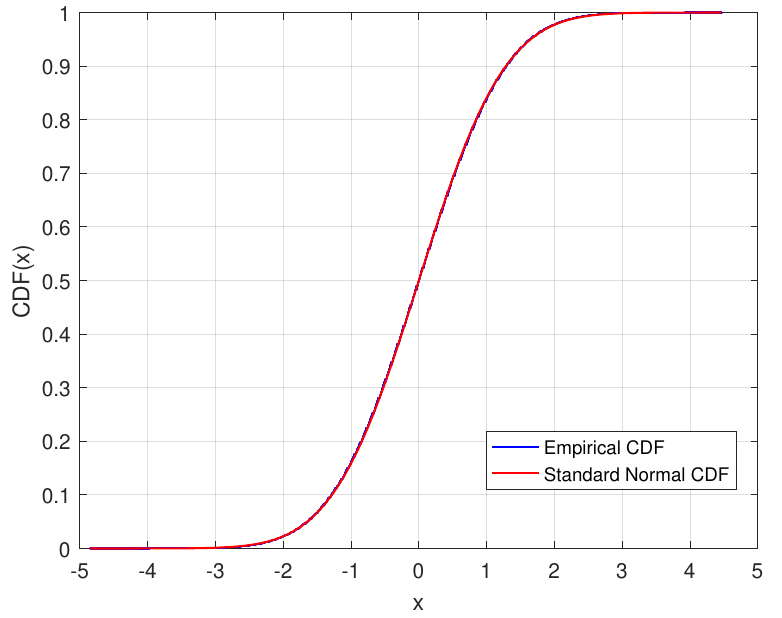} \vspace{-5mm} 
\caption{KYBER1024: Comparing the CDF of $Y_{1}/\sigma_{G}$
to the standard normal distribution with $1,000,000$ samples, experimental validation of CLT assumption}
\label{K_CDF_Plot}
\vspace{-5mm} 
\end{figure}

Using Theorem \ref{the:CLT_noise}, we derive the DFR as follows.

\begin{theorem}\label{the:LM_DFR}
With the Lloyd-Max quantization, the DFR of Kyber is given by
\begin{align}
\delta & = 1-\left(1-%
{\textstyle\sum\nolimits}_{e_{L_v}^{(1)} \in \mathcal{S}}
\Pr\left(e_{L_v}^{(1)}\right)Q_{1/2}\left( %
\Vert e_{L_v}^{(1)} \Vert /\sigma _{G},\left\lceil {q}/{4}\right\rfloor/{\sigma
_{G}}\right) \right)^{n}, \label{Kyber_Q_DFR}
\end{align}
where $\sigma_{G}$ is given in Theorem \ref{the:CLT_noise}, and $Q_{M}\left( a,b\right) $ is the generalized Marcum Q-function.  Both $e_{L_v}^{(1)}$ 
and $\Pr\left(e_{L_v}^{(1)}\right)$ can be found by Algorithm \ref{alg:LloydMAX_Pseudocode}.
\end{theorem}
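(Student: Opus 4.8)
## Proof Plan for Theorem 4 (the DFR formula)

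The plan is to reduce the vector decoding problem in \eqref{decoding_mode_coded} with $p=2$ (i.e. the uncoded $2$-PAM used inside \texttt{Kyber.CPA.Dec}) to $n$ independent scalar detection problems, and then to evaluate the per-coordinate error probability by conditioning on the quantization error $e_{L_v}^{(1)}$. First I would invoke Theorem~\ref{the:CLT_noise} to write $n_e = \mathbf{g} + \mathbf{e}_{L_v}$ where $\mathbf{g}\leftarrow\mathcal{N}(0,\sigma_G^2 I_n)$ is independent of $\mathbf{e}_{L_v}$, and recall that the final decoding step returns $\mathsf{Compress}_q(v-\mathbf{s}^T\mathbf{u},1)$, i.e. a coordinate-wise nearest-of-$\{0,\lceil q/2\rfloor\}$ decision on $y = \lceil q/2\rfloor\cdot m + n_e$. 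Since the Gaussian part has a diagonal covariance and the $e_{L_v}^{(1)}$ are i.i.d., the $n$ coordinates are mutually independent, so $\delta = 1-\prod_{j=1}^n(1-p_j) = 1-(1-p_1)^n$ where $p_1$ is the probability that a single coordinate is wrongly decoded.

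Next I would compute $p_1$ by conditioning. By symmetry of the constellation $\{0,\lceil q/2\rfloor\}$ and of the noise, it suffices to assume the transmitted symbol is $0$; a coordinate error occurs when the noise pushes the observation past the midpoint threshold $\lceil q/4\rfloor$ in absolute value modulo $q$ — more precisely when $|n_e^{(1)}| > \lceil q/4\rfloor$ in the relevant wrap-around sense. Conditioning on $e_{L_v}^{(1)}=e$, the observation noise is $e + g$ with $g\sim\mathcal{N}(0,\sigma_G^2)$, so the conditional error probability is $\Pr(|e+g|>\lceil q/4\rfloor)$. The key identification is that, writing $a = \|e\|/\sigma_G$ and $b=\lceil q/4\rfloor/\sigma_G$, this tail probability equals the first-order generalized Marcum Q-function $Q_{1/2}(a,b)$: indeed $Q_{1/2}(a,b) = \tfrac12\big(\mathrm{erfc}((b-a)/\sqrt2)+\mathrm{erfc}((b+a)/\sqrt2)\big) = \Pr(|N(a,1)|>b)$, which after scaling by $\sigma_G$ is exactly $\Pr(|e+g|>\lceil q/4\rfloor)$. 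Averaging over $e_{L_v}^{(1)}\in\mathcal{S}$ against its PMF $\Pr(e_{L_v}^{(1)})$ — both furnished by Algorithm~\ref{alg:LloydMAX_Pseudocode} — gives $p_1 = \sum_{e_{L_v}^{(1)}\in\mathcal{S}}\Pr(e_{L_v}^{(1)})\,Q_{1/2}(\|e_{L_v}^{(1)}\|/\sigma_G,\lceil q/4\rfloor/\sigma_G)$, and substituting into $\delta=1-(1-p_1)^n$ yields \eqref{Kyber_Q_DFR}.

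I expect the main obstacle to be the careful justification of the threshold $\lceil q/4\rfloor$ and the reduction of the modular decision to the plain real-line two-sided tail. One has to argue that the decryption noise $n_e$ is small enough (with overwhelming probability) that wrap-around events around $q$ do not contribute, so that $\mathsf{Compress}_q(\cdot,1)$ on coordinate $y$ really is "round to $0$ vs. $\lceil q/2\rfloor$," with decision boundary exactly halfway; since $q$ is odd the rounding-ties convention and the $\lceil\cdot\rfloor$ in $\lceil q/2\rfloor$ and $\lceil q/4\rfloor$ must be handled, but these only shift the boundary by an $O(1)$ amount that is negligible against $\sigma_G$ and is already absorbed in the ceiling notation. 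A secondary, more cosmetic obstacle is bookkeeping the fact that the quantization errors attached to $\mathbf{u}$ enter only through their \emph{variance} (via $\sigma_G^2$, by Theorem~\ref{the:CLT_noise}) while the error attached to $v$ enters through its full \emph{distribution}; this asymmetry is exactly why $\mathsf{MSE}(e_{L_u}^{(1)})$ appears inside $\sigma_G$ but $\Pr(e_{L_v}^{(1)})$ appears in the outer sum, and the proof just needs to state that the $v$-side error is not Gaussian-approximated because it is a single term, not a sum of $\Theta(kn)$ terms.
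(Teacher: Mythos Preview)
Your proposal is correct and follows essentially the same approach as the paper: invoke Theorem~\ref{the:CLT_noise} to write $n_e=x+e_{L_v}$ with $x$ Gaussian, reduce to a single coordinate by the i.i.d.\ structure so that $\delta=1-\Pr(\lvert n_e^{(1)}\rvert\le\lceil q/4\rfloor)^n$, condition on $e_{L_v}^{(1)}$, and recognize the resulting conditional tail as the non-central chi (equivalently $Q_{1/2}$) expression. Your discussion of the $\lceil q/4\rfloor$ threshold, the wrap-around issue, and the $\mathbf{u}$-side versus $v$-side asymmetry is additional care that the paper's proof leaves implicit.
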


\begin{proof}
Given Theorem \ref{the:CLT_noise}, we can write
$n_{e}=x+e_{L_v}$, where $x\leftarrow \mathcal{N}( 0,\sigma
_{G}^{2}I)$. Since the elements in $n_{e}$ are i.i.d., for simplicity, we consider one element in $n_{e}$, denoted as $n_{e}^{(1)}=x^{(1)}+e_{L_v}^{(1)}$. The DFR can be written as 
\begin{equation}
\delta = 1-\Pr(\Vert n_{e}^{(1)} \Vert \leq \lceil q/4\rfloor )^n. \label{p0}
\end{equation}
We have
\begin{equation}
\Pr \left( \Vert n_{e}^{(1)}\Vert \leq z\right)  ={\textstyle\sum\nolimits}_{e_{L_v}^{(1)} \in \mathcal{S}}\Pr \left(  \Vert x^{(1)}+e_{L_v}^{(1)} \Vert \leq z\left\vert e_{L_v}^{(1)}\right. \right) \Pr \left(e_{L_v}^{(1)}\right). \label{p1}
\end{equation}
Given $e_{L_v}^{(1)}$, $\Vert x^{(1)}+e_{L_v}^{(1)} \Vert$ follows non-central chi distribution, i.e.,
\begin{equation}
\Pr \left(  \Vert x^{(1)}+e_{L_v}^{(1)} \Vert \leq z\left\vert e_{L_v}^{(1)}\right. \right) = 1-Q_{1/2}\left( \Vert e_{L_v}^{(1)} \Vert/{\sigma
_{G}},{z}/{\sigma
_{G}}\right), \label{p2}
\end{equation}
where $Q_{M}\left( a,b\right) $ is the generalized Marcum Q-function. By substituting (\ref{p2}) and (\ref{p1}) for (\ref{p0}), we obtain (\ref{Kyber_Q_DFR}). 
\end{proof}

In Table \ref{Com_KC_MSE}, we compare the DFRs of Kyber with different quantizations. We observe that the DFR is noticeably decreased by using a MMSE quantization.
\vspace{-4mm}
\begin{table}[ht]
\caption{DFR: Kyber Compression vs. Lloyd-Max}
\label{Com_KC_MSE}\centering
\vspace{-3mm} 
\begin{tabular}{|c|c|c|}
\hline
  & Original Compression &  Lloyd-Max  \\ \hline
KYBER512    & $2^{-139}$ & $2^{-150}$  \\ \hline
KYBER768    & $2^{-164}$ & $2^{-177}$  \\ \hline
KYBER1024    & $2^{-174}$ & $2^{-196}$  \\ \hline
\end{tabular}
\vspace{-10mm}
\end{table}

\subsection{Complexity of Lloyd-Max Quantization}
\vspace{-1mm}
Kyber compression function in (\ref{ComDecom}) involves one multiplication and one division operations, which imply time complexity $O(\log_2(q)^2)$. Here we assume that the rounding and modulo operations have very low computational cost and are treated as constant-time operations. The complexity of the decompression function is the same as the compression function.

For the Lloyd-Max quantization, an input $x$ will be mapped to the nearest quantizer. Given a pre-stored codebook $\mathcal{C}_L$, we can use a binary search method to find the nearest quantizer, with time complexity $O(\log_2(L))$. Considering that the Lloyd-Max quantization does not require a reconstruction function, we conclude that the Lloyd-Max quantization takes less time than Kyber compression.

Note that the security levels of Kyber are computed independent of the compression method and the compression noise level \cite{Kyber2021}. Therefore, using Lloyd-Max quantization does not affect the security argument of Kyber.

\vspace{-2mm}

\section{Semi-Compressed Kyber}
\subsection{The Design}
We first revisit the ciphertext compression strategy in Kyber (Algorithm \ref{alg:kyber_enc}):
\begin{itemize}
    \item Step 4: compress the first part of ciphertext, $\mathbf{u}$, to $knd_u$ bits.
    \item Step 5: compress the second part of ciphertext, $\mathbf{v}$, to $nd_v$ bits.
\end{itemize}
The CER of Kybe can be computed as
\begin{equation}
\text{CER}=\dfrac{knd_u+nd_v}{K}, 
\end{equation}%
where $K$ represents the number of information bits. In the original Kyber, $K$ is set to be $256$ bits. Considering the values of $(n,k, d_u, d_v)$ in Table \ref{Kyber_Par}, we notice that compressing $\mathbf{v}$ has little impact on reducing CER, but at the cost of adding a $7$ or $8$-bit quantization noise to the decoding.

\begin{definition}[SC-Kyber] We consider a variant of Kyber as follows.
\begin{equation}
 (Q_{\mathsf{MMSE},L_u}(\buone),\boldsymbol{\mathrm{t}}^{T}\boldsymbol{\mathrm{r}}+e_2+\left\lceil {q}/{p}\right\rfloor \cdot \mathsf{ENC}(m)),
\end{equation}
where $p \geq 2$, $m \leftarrow \{0,1\}^K$, and $\mathsf{ENC}(m) :  \{0,1\}^K \mapsto \mathbb{Z}_p^{n}$.
Since only the first part of ciphertext is quantized, 
this scheme is referred to as \emph{Semi-Compressed Kyber} (SC-Kyber).  The CER reduces to
\begin{equation}
\text{CER}=\dfrac{knd_u+12n}{K}. 
\end{equation}
\end{definition}

We expect that the reduced decryption decoding noise enables encoding a larger plaintext than \cite{liu2023lattice}\cite{LWEchannel2022}, i.e., a larger $K$. Let $\mathsf{DEC}((v-\bs^{T}\bu)/\left\lceil {q}/p\right\rfloor): \mathbb{Z}_p^{n}  \mapsto \{0,1\}^K$ be the decoding function. The choices of $\mathsf{ENC}(\cdot)$ and $\mathsf{DEC}(\cdot)$ are discussed in the Subsection \ref{sssec:encoding}. The key generation function of SC-Kyber is the same as Algorithm \ref{alg:kyber_keygen}. Thus we omit details here. The encryption and decryption algorithms of SC-Kyber are described in Algorithms \ref{alg:SCkyber_enc} and \ref{alg:SCkyber_dec}.

\vspace{-5mm}

\begin{algorithm}[H]
\caption{$\mathsf{SC-Kyber.CPA.Enc}$ $(pk=(\boldsymbol{\mathrm{t}},\rho),m\in\{0,1\}^{K})$}
\label{alg:SCkyber_enc}
\begin{algorithmic}[1]

	\State
	$r \leftarrow \{0,1\}^{256}$
	
	
	\State
	$\boldsymbol{\mathrm{A}}\sim R_{q}^{k\times k}\coloneqq\mathsf{Sam}(\rho)$
	
	\State  $(\boldsymbol{\mathrm{r}},\boldsymbol{\mathrm{e}_{1}},e_{2})\sim\beta_{\eta_1}^{k}\times\beta_{\eta_2}^{k}\times\beta_{\eta_2}\coloneqq\mathsf{Sam}(r)$
	
	\State  $\boldsymbol{\mathrm{u}}\coloneqq Q_{\mathsf{MMSE},L_u}(\buone)$\label{line:scu}
	
	\State  $v\coloneqq\boldsymbol{\mathrm{t}}^{T}\boldsymbol{\mathrm{r}}+e_2+\left\lceil {q}/p\right\rfloor \cdot \mathsf{ENC}(m) $ \label{line:scv}
	
	\State \Return $c\coloneqq(\mathsf{Index}_{L_u}(\boldsymbol{\mathrm{u}}),v)$

\end{algorithmic}
\end{algorithm}

\vspace{-12mm}

\begin{algorithm}[H]
\caption{${\mathsf{SC-Kyber.CPA.Dec}}\ensuremath{(sk=\bs,c=(\mathsf{Index}_{L_u}(\boldsymbol{\mathrm{u}}),v))}$}
\label{alg:SCkyber_dec}
\begin{algorithmic}[1]

    \State
    $\bu\coloneqq \mathcal{C}_{L_u}(\mathsf{Index}_{L_u}(\boldsymbol{\mathrm{u}}))$

    \State \Return $\mathsf{DEC}((v-\bs^{T}\bu)/\left\lceil {q}/p\right\rfloor)$

\end{algorithmic}
\end{algorithm}

\vspace{-10mm}

\subsection{Information Theoretic Analysis of SC-Kyber}
We study the maximum possible plaintext size $K$, or equivalently, the minimum possible CER for SC-Kyber. Since $v$ remains uncompressed, according to Theorem \ref{the:CLT_noise}, the decoding problem in SC-Kyber can be formulated by
\begin{equation}
y= \left\lceil {q}/{p}\right\rfloor \cdot \mathsf{ENC}(m) +n_e, \label{ne_sc}
\end{equation}
where $n_{e}\leftarrow \mathcal{N}(0,\sigma _{G}^{2}I_{n})$ and the value of $\sigma_{G}$ is given in Theorem \ref{the:CLT_noise}.

It is evident that the SC-Kyber decoding problem is equivalent to the detection problem in an AWGN channel with a $p$-PAM modulation and $n$ independent channel uses. An upper bound on the plaintext size, i.e., $K$, is given below.

\begin{lemma}[Maximum Plaintext Size] \label{lem:Max_pt}For any $\mathsf{ENC}(m):  \{0,1\}^K \mapsto \mathbb{Z}_p^{n}$, the maximum amount of error-free information that can theoretically be decrypted is
\begin{equation}
K \leq n/2\log_2\left(\dfrac{1+\gamma}{1+\gamma/p^2}\right) \triangleq K_{\mathsf{UB}},
\end{equation}
where 
\begin{equation}
\gamma = \dfrac{\left\lceil {q}/{p}\right\rfloor^2\sum_{i=0}^{p-1} (i-(p-1)/2)^2}{p\sigma_G^2}, \label{snr}
\end{equation}
and the value of $\sigma_{G}$ is given in Theorem \ref{the:CLT_noise}.
\end{lemma}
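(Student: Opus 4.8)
The plan is to recognize the SC-Kyber decoding problem (\ref{ne_sc}) as $n$ independent uses of a scalar AWGN channel whose input is constrained to the $p$-PAM constellation $\{0,\lceil q/p\rfloor,\dots,(p-1)\lceil q/p\rfloor\}$, and then to run the channel-coding converse (Fano's inequality) to turn ``error-free decryption'' into a mutual-information inequality, finishing by bounding the $p$-PAM constrained capacity. I would first record the bookkeeping identity behind (\ref{snr}): since $\sum_{i=0}^{p-1}(i-(p-1)/2)^2=p(p^2-1)/12$, equation (\ref{snr}) reads $\gamma=\lceil q/p\rfloor^2(p^2-1)/(12\sigma_G^2)=\var(\lceil q/p\rfloor X)/\sigma_G^2$ for $X$ uniform on $\{0,\dots,p-1\}$, i.e. $\gamma$ is exactly the signal-to-noise ratio of this channel under a uniform input. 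By Theorem \ref{the:CLT_noise} (with $v$ uncompressed, so that $e_{L_v}=0$) the per-coordinate noise is i.i.d.\ $Z\sim\mathcal N(0,\sigma_G^2)$, so this identification is exact in the CLT model.

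Next I would run the converse. With $m$ uniform on $\{0,1\}^K$, error-free decryption means $\Pr(\hat m\ne m)\to 0$, so Fano plus data-processing give $K=H(m)\le I(m;\mathbf y)+1+K\Pr(\mathrm{err})$. Since $\mathbf x=\lceil q/p\rfloor\cdot\mathsf{ENC}(m)$ is a function of $m$ and the $n$ outputs are conditionally independent given $\mathbf x$, one gets $I(m;\mathbf y)\le I(\mathbf x;\mathbf y)=h(\mathbf y)-n\,h(Z)\le\sum_{i=1}^n\bigl(h(y_i)-h(Z)\bigr)=\sum_{i=1}^n I(x_i;y_i)$, using subadditivity of differential entropy. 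Each summand is at most the $p$-PAM constrained capacity $C_p(\gamma)=\max_{P_X}I(X;X+Z)$, the maximum being over input laws supported on the constellation; letting $\Pr(\mathrm{err})\to0$ to discard the additive Fano slack yields $K\le n\,C_p(\gamma)$.

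It then remains to show $C_p(\gamma)\le\tfrac12\log_2\!\bigl((1+\gamma)/(1+\gamma/p^2)\bigr)$. Writing $I(X;Y)=h(Y)-\tfrac12\log_2(2\pi e\sigma_G^2)$, the Gaussian maximum-entropy inequality $h(Y)\le\tfrac12\log_2(2\pi e\,\var(Y))$ gives the power-limited bound $I(X;Y)\le\tfrac12\log_2(1+\gamma)$, while $I(X;Y)\le H(X)\le\log_2 p$ gives the cardinality-limited bound. The stated estimate is the smooth interpolation of the two regimes: the $p$-level granularity of the input leaves an unresolvable residual of variance $\approx\var(X)/p^2$, so a $p$-PAM input is no more informative than a Gaussian input of power $P(1-1/p^2)$ seen through effective noise $\sigma_G^2+P/p^2$ (with $P=\var(\lceil q/p\rfloor X)=\gamma\sigma_G^2$), whence the scalar AWGN formula gives $C_p(\gamma)\le\tfrac12\log_2\!\bigl(1+\tfrac{P(1-1/p^2)}{\sigma_G^2+P/p^2}\bigr)=\tfrac12\log_2\!\bigl((1+\gamma)/(1+\gamma/p^2)\bigr)$. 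Combined with $K\le n\,C_p(\gamma)$ this is $K\le K_{\mathsf{UB}}$, and the sanity limits check out ($p\to\infty$ recovers $\tfrac n2\log_2(1+\gamma)$; $\gamma\to\infty$ recovers $n\log_2 p$).

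I expect the third step to be the crux: the two elementary bounds only deliver $\min\{\tfrac12\log_2(1+\gamma),\log_2 p\}$, and promoting the granular-noise heuristic to a rigorous inequality valid at every $\gamma$ is the delicate part — but it is essentially tight in the high-SNR regime SC-Kyber actually operates in (where $\lceil q/p\rfloor\gg\sigma_G$ and $\gamma\gg1$), which is the case that matters here. The remaining pieces — the arithmetic reduction of (\ref{snr}), the subadditivity step, and removing the Fano slack by passing to the vanishing-error limit — are routine.
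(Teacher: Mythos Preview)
Your reduction to $n$ independent uses of a scalar AWGN channel with a $p$-PAM input is exactly what the paper does (it first shifts by $\lceil q/p\rfloor(p-1)/2$ to center the constellation, which is cosmetic). Where you diverge is after that point. You run a full converse --- Fano, data processing, subadditivity of differential entropy --- to obtain $K\le n\,C_p(\gamma)$, and then try to argue the granular-noise bound $C_p(\gamma)\le\tfrac12\log_2\bigl((1+\gamma)/(1+\gamma/p^2)\bigr)$ from scratch. The paper does none of this: it simply observes that the model is a $p$-PAM AWGN channel and then \emph{cites} the literature (reference \cite{PAMRate_2021} in the paper) for the fact that the maximum achievable rate per channel use is well-approximated by $\tfrac12\log_2\bigl((1+\gamma)/(1+\gamma/p^2)\bigr)$, writing this with an explicit $\approx$ rather than $\le$. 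In other words, the ``crux'' you correctly flagged in your third step is not proved in the paper at all; it is imported as a known approximation. Your argument is more self-contained and more honest about where the inequality is tight versus heuristic; the paper's argument is shorter but relies entirely on the cited result and does not claim a rigorous upper bound at every $\gamma$.
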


\begin{proof}
We subtract $\left\lceil {q}/{p}\right\rfloor (p-1)/2 $ from both sides of (\ref{ne_sc}), obtaining
\begin{equation}
\hat{y} = \left\lceil {q}/{p}\right\rfloor \hat{m} + n_e, \label{ne_pam}
\end{equation}
where $\hat{y} = y- \left\lceil {q}/{p}\right\rfloor (p-1)/2$ and $\hat{m}= \mathsf{ENC}(m) - (p-1)/2$. Note that shifting a random variable does not change its entropy. Since $n_{e}\leftarrow \mathcal{N}(0,\sigma _{G}^{2}I_{n})$, the model (\ref{ne_pam}) is equivalent to an AWGN channel with a $p$-PAM modulation and $n$ independent channel uses. The maximum achievable information rate (bit per channel use) can be well-approximated by \cite{PAMRate_2021}
\begin{equation}
\max_{\Pr(\hat{m})}\dfrac{I(\hat{y},\hat{m})}{n} \approx 1/2\log_2\left(\dfrac{1+\gamma}{1+\gamma/p^2}\right) \geq 
\dfrac{K}{n},
\end{equation}
where $\gamma$ is given in (\ref{snr}).
\end{proof}

\begin{remark}
Using Lemma \ref{lem:Max_pt}, a lower bound on CER is given by
\begin{equation}
\mathsf{CER} \geq  \dfrac{knd_u+12n}{K_{\mathsf{UB}}} \triangleq \mathsf{CER_{LB}} .
\end{equation}
\end{remark}

In Table \ref{PAM_Rate}, we compute the values of $K_{\mathsf{UB}}$ and $\mathsf{CER_{LB}}$ for different $p$. We observe that with $p=8$, it is theoretically possible to encapsulate two $256$-bit AES keys in a single ciphertext. In the next subsection, we will present a practical encoding scheme to achieve this goal.
\vspace{-5mm}
\begin{table}[th]
\centering
\caption{SC-KYBER1024: $K_{\mathsf{UB}}$ and $\mathsf{CER_{LB}}$}
\label{PAM_Rate}\centering
\vspace{-3mm} 
\begin{tabular}{|c|c|c|c|c|}
\hline
$p$ & $2$ & $4$ & $8$  & $16$ \\ \hline
$K_{\mathsf{UB}}$ & $255$ & $505$ & $742$ & $935$ \\ \hline
$\mathsf{CER_{LB}}$ & $56.2$ & $28.4$ & $19.3$ & $15.3$ \\ \hline
\end{tabular}
\vspace{-5mm}
\end{table}

\subsection{Encoding, Decoding, and DFR} \label{sssec:encoding}
We consider a binary error correcting code $(N,K,t)$, where $N=n\log_2(p)$ is the codeword length and $t$ is the number of correctable errors \cite{ecclin2004}. Let $c_{\mathsf{B}} \in \{0,1\}^N$ be a binary codeword. The binary encoder $\{0,1\}^K \mapsto \{0,1\}^N$ and the corresponding binay decoder $\{0,1\}^N \mapsto \{0,1\}^K$ are defined by
\begin{align}
c_{\mathsf{B}}&=\mathsf{B}\text{-}\mathsf{Enc}(m)\nonumber \nonumber \\
m&=\mathsf{B}\text{-}\mathsf{Dec}(c_{\mathsf{B}}). \label{ECC_ENC}
\end{align}
Gray code \cite{Proakis} is used to map every $\log_2(p)$ bits in $c_{\mathsf{B}}$ to a (shifted) $p$-PAM symbol. The bit mapper $\{0,1\}^N \mapsto \mathbb{Z}_p^{n}$ and demapper $\mathbb{Z}_p^{n} \mapsto \{0,1\}^N$ are given by
\begin{align}
x&=\mathsf{Gray}(c_{\mathsf{B}})\nonumber \nonumber \\
c_{\mathsf{B}}&=\mathsf{Gray}^{-1}(x). \label{Gray_mp}
\end{align}

\begin{example}
In Table \ref{Gray_Mapping}, we show the bit‐to‐symbol mapping on (shifted) $8$‐PAM. With Gray Code, only one bit changes state from one position to another.
\end{example}
\vspace{-3mm} 
\begin{table}[th]
\centering
\caption{Bit‐to‐symbol mapping on (shifted) $8$‐PAM}
\label{Gray_Mapping}\centering
\vspace{-3mm} 
\begin{tabular}{|c|c|c|c|c|c|c|c|c|}
\hline
Symbol & $0$ & $1$ & $2$  & $3$ & $4$ & $5$ & $6$  & $7$ \\ \hline
Bits & $000$ & $001$ & $011$ & $010$ & $110$ & $111$ & $101$ & $100$ \\ \hline
\end{tabular}
\vspace{-5mm}
\end{table}

Combining (\ref{ECC_ENC}) and (\ref{Gray_mp}), the encoding and decoding functions in Algorithms \ref{alg:SCkyber_enc} and \ref{alg:SCkyber_dec} are given by
\begin{align}
x&= \mathsf{ENC}(m)=\mathsf{Gray}(\mathsf{B}\text{-}\mathsf{Enc}(m))\nonumber \nonumber \\
m&=\mathsf{DEC}(x)=\mathsf{B}\text{-}\mathsf{Dec}(\mathsf{Gray}^{-1}(\left\lceil x\right\rfloor \bmod p)). \label{enc_dec_sc}
\end{align}
We then derive the DFR for the proposed encoding scheme. 
\begin{lemma}
With $p$-PAM, Gray mapping, and a binary code $(N,K,t)$, for a large $\gamma$ in (\ref{snr}), the DFR of SC-Kyber can be computed by
\begin{equation}
\delta =\sum_{j=t+1}^N
\begin{pmatrix}
N \\j
\end{pmatrix}
\mathsf{RBER}^j(1-\mathsf{RBER})^{N-j}, \label{DFR_SC}
\end{equation}
where $\mathsf{RBER} < 2Q( \left\lceil {q}/(2p)\right\rfloor/\sigma_G)/\log_2(p)$, $Q(\cdot)$ is the Q-function, and the value of $\sigma_{G}$ is given in Theorem \ref{the:CLT_noise}.
\end{lemma}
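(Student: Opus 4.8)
The plan is to reduce the decryption-failure event to a classical bounded-distance decoding failure over a binary symmetric channel and then apply the standard tail bound. First I would invoke Theorem~\ref{the:CLT_noise}: since the second ciphertext component is left uncompressed in SC-Kyber, the term $e_{L_v}$ vanishes and the decoding model (\ref{ne_sc}) is exactly an AWGN channel, $y=\left\lceil q/p\right\rfloor\cdot\mathsf{ENC}(m)+n_e$ with $n_e\leftarrow\mathcal{N}(0,\sigma_G^2 I_n)$ and the $n$ coordinates mutually independent. The decoder in (\ref{enc_dec_sc}) first performs a coordinate-wise hard decision $\left\lceil x\right\rfloor\bmod p$, i.e. it maps each received coordinate to the nearest point of the shifted $p$-PAM constellation $\{0,1,\dots,p-1\}$ (after rescaling by $\left\lceil q/p\right\rfloor$), then Gray-demaps via $\mathsf{Gray}^{-1}$ and runs the bounded-distance binary decoder $\mathsf{B}\text{-}\mathsf{Dec}$.

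The next step is to bound the raw bit error rate $\mathsf{RBER}$. For a $p$-PAM constellation with spacing $\left\lceil q/p\right\rfloor$ and Gaussian noise of standard deviation $\sigma_G$, an interior symbol has exactly two nearest neighbours, each at distance $\left\lceil q/p\right\rfloor$; the probability that the hard decision leaves the correct decision cell is therefore at most $2Q(\left\lceil q/p\right\rfloor/(2\sigma_G))=2Q(\left\lceil q/(2p)\right\rfloor/\sigma_G)$, where the two boundary symbols have only one neighbour (so this is a genuine upper bound) and, for large $\gamma$ in (\ref{snr}), the contribution of transitions to non-adjacent symbols is negligible — this is precisely where the large-$\gamma$ hypothesis is used. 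Under Gray mapping two adjacent constellation points differ in exactly one of the $\log_2 p$ label bits, so a nearest-neighbour symbol error flips precisely one bit; averaging over the $\log_2 p$ bit positions of a label gives $\mathsf{RBER}<2Q(\left\lceil q/(2p)\right\rfloor/\sigma_G)/\log_2 p$. Because the $n$ channel uses are independent and each (dominant) symbol error touches a single label bit, the $N=n\log_2 p$ hard-decision bits behave as $N$ independent Bernoulli$(\mathsf{RBER})$ trials, i.e. a BSC with crossover probability $\mathsf{RBER}$.

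Finally, the binary code $(N,K,t)$ with bounded-distance decoding recovers $m$ correctly whenever the number of bit errors in the length-$N$ block is at most $t$, so a decryption failure occurs exactly when the Binomial$(N,\mathsf{RBER})$ error count exceeds $t$, which yields the tail sum (\ref{DFR_SC}). The main obstacle is the second step: rigorously justifying that at high SNR the symbol-error probability is dominated by transitions to adjacent constellation points and that Gray labelling then forces single-bit errors, so that the per-bit errors are, to leading order, independent — that is, making precise the ``large $\gamma$'' approximation and controlling the error incurred by discarding multi-step symbol errors together with the inter-bit correlations they would introduce within a label. I would handle this with a nearest-neighbour union bound for the symbol error probability and the observation that the discarded terms decay like $Q(c\left\lceil q/p\right\rfloor/\sigma_G)$ with $c\geq 3/2$, which is super-polynomially smaller than the retained $Q(\left\lceil q/(2p)\right\rfloor/\sigma_G)$ term and hence does not affect the stated inequality for $\mathsf{RBER}$.
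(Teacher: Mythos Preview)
Your proposal is correct and follows essentially the same route as the paper: identify the SC-Kyber decoding as $p$-PAM over AWGN via Theorem~\ref{the:CLT_noise}, bound the raw symbol error rate by the nearest-neighbour Q-function term (with the interior/endpoint distinction giving the strict inequality), convert to a bit error rate via the Gray-label ``one symbol error $\approx$ one bit error'' argument at high $\gamma$, and finish with the binomial tail for bounded-distance decoding of the $(N,K,t)$ code. Your write-up is in fact slightly more explicit than the paper's about the independence of the $N$ bit errors and about why the discarded non-adjacent transitions are negligible; the only cosmetic slip is the equality $\left\lceil q/p\right\rfloor/(2\sigma_G)=\left\lceil q/(2p)\right\rfloor/\sigma_G$, which need not hold exactly for all $p$, but the paper adopts the same convention.
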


\begin{proof}
Since SC-Kyber decoding is equivalent to decoding in a $p$-PAM input AWGN channel, the raw symbol error rate (RSER) is 
\begin{align}
\mathsf{RSER} &= 
\begin{cases*}
        {} 2Q( \left\lceil {q}/(2p)\right\rfloor/\sigma_G), & \text{interior points} \\
        {} Q( \left\lceil {q}/(2p)\right\rfloor/\sigma_G), & \text{end points}
    \end{cases*}
\end{align}
Since Gray code makes sure that the most likely symbol errors cause only one bit error, for a large $\gamma$, the raw bit error rate (RBER) is well-approximated by
\begin{equation}
\mathsf{RBER} \approx \mathsf{RSER}/\log_2(p). 
\end{equation}
For a binary code $(N,K,t)$, the DFR can be computed by (\ref{DFR_SC}).
\end{proof}

\begin{example}
We consider an encoding approach with $8$-PAM, Gray mapping, and binary BCH$(768,638,13)$ \cite{ecclin2004}, which is shortened from BCH$(1023,893,13)$. For the SC-KYBER1024, we have $\delta=2^{-174}$ and $\mathsf{CER}=22.5$. Since the original KYBER1024 has $\mathsf{CER}=49$, the communication overhead is reduced by $54\%$. Compared with the bounds in Table \ref{PAM_Rate}, there is a gap of about $0.4$ bit per channel use between the theoretical limit and the rate achieved by the BCH code.  
\end{example}

In Table \ref{com_bch}, we compare the performances of different variants of KYBER1024, given the same security parameters $(n, q, k, \eta_1,\eta_2)$. With the proposed semi-compressed variant, we can easily encrypt two $256$-bit AES keys into a single ciphertext. By changing the code rate, either DFR or CER can be minimized.

\vspace{-5mm} 

\begin{table}[th]
\centering
\caption{Parameters for different variants of KYBER1024: $(n=256, q=3329, k=4, \eta_1=2,\eta_2=2)$}
\label{com_bch}\centering
\vspace{-3mm} 
\begin{threeparttable}[b]
\begin{tabular}{|c|c|c|c|c|c|c|c|c|}
\hline
 Codes & Compression & $p$ & $K$ & $d_u$ & $d_v$ & CER  &  DFR &  Source \\ \hline
Uncoded & Original \cite{NISTpqcdraft2023} & $2$ & $256$ & $11$ & $5$ & $49$ & $2^{-174}$ &  \cite{NISTpqcdraft2023}\\ \hline
Leech lattice  & Original \cite{NISTpqcdraft2023} & $8$ &  $380$ & $11$ & $5$ & $33$ & $2^{-226}$  & \cite{liu2023lattice} \\ \hline 
Q-BCH   & Original \cite{NISTpqcdraft2023} & $5$ & $471$ & $11$ & $5$  & $26.6$ & $2^{-174}$ & \cite{LWEchannel2022}\\ \hline
B-BCH$(768,513,26)$ & Lloyd-Max & $8$ &  $513$ & $10$ & $12$  & $26$ & $2^{-268}$ & This work \\ \hline
B-BCH$(768,638,13)$ & Lloyd-Max & $8$ &  $638$&  $11$ & $12$ &  $22.5$ & $2^{-174}$ & This work \\ \hline
\end{tabular}
\vspace{-1mm}
\end{threeparttable}
\end{table}

\subsection{Security}
 As shown in \cite{LWEchannel2022}, the encoding has no influence on the distribution of the ciphertext. Moreover, the proposed encoding scheme uses a constant-time BCH decoder in \cite{constantBCH2020}, thus it is resistant to timing attacks. Analysis of potential security issues and mitigation strategies is left for future work.

\section{Conclusion}
In this paper, we prove that the the communication overhead of Kyber can be reduced by more than $50\%$, i.e., encapsulating two $256$-bit AES keys in a single ciphertext. Our solution involves a provably optimal quantization and a constant-time encoding scheme. Closed-form expressions of the DFR is derived. From the information theoretic perspective, we derive the maximum possible plaintext size for Kyber, which tells the minimum possible communication overhead. Our results also imply that it is possible to design a more compact M-LWE encryption scheme than Kyber, using advanced quantization and coding technologies.

\begin{credits}

\end{credits}

\bibliographystyle{splncs04}
\bibliography{LIUBIB}

\end{document}